\newcommand{\bH}{\mathbb{H}}
\newcommand{\bN}{\mathbb{N}}
\newcommand{\bR}{\mathbb{R}}
\newcommand{\bZ}{\mathbb{Z}}
\DeclareMathOperator*{\argmin}{\arg\min}
\DeclarePairedDelimiter\bra{\langle}{\rvert}
\DeclarePairedDelimiter\ket{\lvert}{\rangle}
\DeclarePairedDelimiterX\braket[2]{\langle}{\rangle}{#1\,\delimsize\vert\,\mathopen{}#2}
\theoremstyle{plain}
\newtheorem{theorem}{Theorem}
\newtheorem{lemma}[theorem]{Lemma}
\newtheorem{corollary}[theorem]{Corollary}
\newtheorem{proposition}[theorem]{Proposition}
\newtheorem{definition}{Definition}
\newtheorem*{remark}{Remark}
\begin{document}

\begin{frontmatter}
\title{Redheffer: Trig to Quantum Error Bounds}
\runtitle{Generalized of Redheffer Inequality}

\begin{aug}
\author[A]{\fnms{Ho} \snm{Yun}\ead[label=e1]{ho.yun@epfl.ch}}
\runauthor{H. Yun}
\address[A]{Ecole Polytechnique F\'ed\'erale de Lausanne, \printead{e1}}
\end{aug}

\begin{abstract}
In the existing literature, the Redheffer inequality is typically proven using mathematical induction. In this short paper, we present a straightforward proof of this inequality by leveraging trigonometric substitution. We then extend the Redheffer inequality by introducing an exponent factor, aiming for the sharpest possible refinement. Notably, when the exponent is 2, our findings have implications for quantum error correction in the context of quantum phase estimation.
\end{abstract}

\begin{keyword}[class=AMS]
\kwd[Primary ]{26D05}
\kwd[; secondary ]{26D15}
\end{keyword}

\begin{keyword}
\kwd{Quantum computing}
\kwd{Quantum phase estimation}
\kwd{Redheffer inequality}
\end{keyword}
\end{frontmatter}


\section{Introduction}
During my studies in the course on Quantum Algorithms taught by Professor Johannes Buchmann, I encountered an intriguing open problem that illuminated the quantum error estimates within phase estimation. In the process of solving this problem, a realization dawned upon me: the inequality at the heart of the quantum error estimates bore a resemblance to what is known as the Redheffer inequality. This serendipitous connection piqued my curiosity – the desire to delineate the critical value of $\alpha > 0$ for the fine-tuned Redheffer inequality within the bounded interval $x \in [0, 1/2]$:
\begin{equation}\label{Redheffer::goal}
\left(1 + 4x^{2} \right)^{1/\alpha} \cos(\pi x) \ge (1-4x^{2}).
\end{equation}

The sharpening of this inequality becomes apparent as $\alpha > 0$ increases, and the critical value of $\alpha > 0$ should be less than or equal to
\begin{equation}\label{opt::thres}
    \alpha_{T} := \lim_{x \nearrow 1/2} \frac{\log(1+4x^{2})}{\log \left((1-4 x^{2}) / \cos(\pi x) \right)} = \frac{\log 2}{\log (4/\pi)} (\approx 2.869).
\end{equation}

In the case where $\alpha = 1$, it is commonly referred to as the Redheffer inequality \cite{qi2006jordan}, and its proof typically relies on  mathematical induction in the literature. However, as can be seen in \cref{Redheffer::orig}, this proof is unnecessarily redundant for $\alpha = 1$. We proceed to demonstrate that a refined approach to mathematical induction effectively validates the inequality \eqref{Redheffer::goal} for $0< \alpha \le \log 2 / \log(21/16) (\approx 2.549)$.
Furthermore, we establish, without resorting to mathematical induction, that $\alpha_{T} (\approx 2.869)$ genuinely represents the sharpest bound. Finally, in the case where $\alpha=2$, our inequality provides the probability that best approximates the phase of an eigenvalue in the context of quantum phase estimation.

\section{Redheffer Inequality}
We present the proof of the original Redheffer inequality, which is notably simpler than the any other proofs available in the literature \cite{qi2006jordan, sandor2015inequality}.

\begin{proposition}[Redheffer]\label{Redheffer::orig}
For $x \in [0,1/2]$, $\left(1 + 4x^{2} \right) \cos(\pi x)  \ge (1-4x^{2})$, with equality holding if and only if $x=0$ or $x=1/2$.
\end{proposition}

\begin{proof}
Note that the map $\beta \in [0, \pi/2] \mapsto 2x=\tan(\beta/2) \in [0, 1/2]$ is bijective, and we obtain the following trigonometric substitution:
\begin{equation*}
    1-4x^{2}=1-\tan^{2} \left(\frac{\beta}{2} \right)=\frac{\cos \beta}{\cos^{2} \left(\frac{\beta}{2} \right)}, \quad
    1+4x^{2}=1+\tan^{2} \left(\frac{\beta}{2} \right)=\frac{1}{\cos^{2} \left(\frac{\beta}{2} \right)}.
\end{equation*}
Applying the change of variables, it remains to show that
\begin{equation*}
    \cos \left(\frac{\pi}{2} \tan \frac{\beta}{2} \right) \ge \cos \beta.
\end{equation*}
Since $\tan(\beta/2)$ is convex on $[0, \pi/2]$, it holds that
\begin{equation*}
    0 \le \frac{\pi}{2} \tan \frac{\beta}{2} \le \beta \le \frac{\pi}{2}, \quad \beta \in \left[0, \frac{\pi}{2} \right], 
\end{equation*}
which proves the inequality, with equality if and only if $\beta=0$ or $\beta=\pi/2$, i.e. $x=0$ or $x=1/2$.
\end{proof}

To achieve a more precise refinement of the Redheffer inequality, we rely on a well-known fact that for $x \in (-1, 1)$, the cosine function can be expressed as an infinite product:
\begin{equation*}
    \cos (\pi x)=\prod_{n=1}^{\infty} \left(1-\frac{4x^{2}}{(2n-1)^{2}} \right)
    =(1-4x^{2}) \lim_{n \rightarrow \infty} F_{n}(4x^{2}),
\end{equation*}
where the sequence of functions $\{F_{n}(y): n=2,3,\dots \}$ is defined as follows:
\begin{equation*}
    F_{n} (y) :=\prod_{k=2}^{n} \left(1-\frac{y}{(2k-1)^{2}} \right), \quad y \in [0, 1]. 
\end{equation*}
Applying $y=4x^{2}$, the generalized Redheffer inequality \eqref{Redheffer::goal} that we wish to prove becomes
\begin{align*}
    \lim_{n \rightarrow \infty} \left(1 + y \right)^{1/\alpha} F_{n}(y) \ge 1, \quad y \in [0, 1].    
\end{align*}

\subsection{Tuning through Induction}
We introduce a lemma that plays a critical role in our refinement:
\begin{proposition}\label{Redheffer::ind}
Let $\alpha >0$. If
\begin{equation}\label{lem::alpha}
    G_{n, \alpha}(y) := \left(1 + y \right)^{1/\alpha} F_{n} (y) - \left( 1+\frac{y}{4n-2} \right) \ge 0, \quad y \in [0, 1],
\end{equation}
holds for some $n=k \ge 2$, then it holds for any $n \ge k$. Consequently, we have
\begin{equation*}
    \left(1 + 4x^{2} \right)^{1/\alpha} \cos(\pi x) \ge (1-4x^{2}), \quad x \in [0, 1/2].
\end{equation*}
\end{proposition}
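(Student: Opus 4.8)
The plan is to establish the claimed monotonicity by a single-step induction and then recover the inequality by passing to the limit through the infinite-product formula for $\cos(\pi x)$. The observation I would stress at the outset is that the exponent $\alpha$ enters only through the base case: the propagation from $n$ to $n+1$ is entirely $\alpha$-free, since the factor $(1+y)^{1/\alpha}$ is merely carried along as a common nonnegative quantity and its actual value is never used. This decoupling is what lets one hypothesis at $n=k$ drive the whole tail $n \ge k$.

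For the induction step I would begin from the multiplicative recursion
\begin{equation*}
    F_{n+1}(y) = F_{n}(y)\left(1 - \frac{y}{(2n+1)^{2}}\right),
\end{equation*}
noting that for $n \ge 2$ the factor $1 - y/(2n+1)^{2}$ is nonnegative on $[0,1]$ because $(2n+1)^{2} \ge 25 > 1 \ge y$. Assuming $G_{n,\alpha}(y) \ge 0$, equivalently $(1+y)^{1/\alpha}F_{n}(y) \ge 1 + \frac{y}{4n-2}$, I would multiply both sides by this nonnegative factor to obtain
\begin{equation*}
    (1+y)^{1/\alpha}F_{n+1}(y) \ge \left(1 + \frac{y}{4n-2}\right)\left(1 - \frac{y}{(2n+1)^{2}}\right).
\end{equation*}
It then remains to verify the purely algebraic inequality
\begin{equation*}
    \left(1 + \frac{y}{4n-2}\right)\left(1 - \frac{y}{(2n+1)^{2}}\right) \ge 1 + \frac{y}{4n+2}, \quad y \in [0,1].
\end{equation*}
I expect this to be the main obstacle, though a computational rather than a conceptual one: expanding and collecting terms, the difference of the two sides equals $\frac{y}{(2n-1)(2n+1)^{2}}\bigl(2 - \frac{y}{2}\bigr)$, which is manifestly nonnegative on $[0,1]$ since $2 - y/2 \ge 3/2$ there. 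The cleanness of this factorization, valid uniformly over the whole interval, is precisely what makes the step go through. This yields $G_{n+1,\alpha}(y) \ge 0$, and induction from $n=k$ gives $G_{n,\alpha}(y) \ge 0$ for every $n \ge k$.

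For the consequence, I would note that $G_{n,\alpha}(y) \ge 0$ forces $(1+y)^{1/\alpha}F_{n}(y) \ge 1 + \frac{y}{4n-2} \ge 1$ for all $n \ge k$. Letting $n \to \infty$ and invoking $\lim_{n\to\infty} F_{n}(y) = \cos(\pi x)/(1-4x^{2})$ with $y = 4x^{2}$, this limit passage gives $(1+4x^{2})^{1/\alpha}\cos(\pi x) \ge 1-4x^{2}$ for $x \in [0,1/2)$. The only care required is that $1-4x^{2} > 0$ when clearing the denominator, so I would treat $x = 1/2$ separately, where both sides vanish and equality holds; the inequality therefore extends to all of $[0,1/2]$, completing the argument under the assumed base case.
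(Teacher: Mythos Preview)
Your proposal is correct and follows essentially the same route as the paper: an induction step using the recursion $F_{n+1}(y) = F_n(y)\bigl(1 - y/(2n+1)^2\bigr)$, reduction to the purely algebraic inequality $\bigl(1 + \tfrac{y}{4n-2}\bigr)\bigl(1 - \tfrac{y}{(2n+1)^2}\bigr) \ge 1 + \tfrac{y}{4n+2}$, and then passage to the limit via the infinite product. Your direct factorization of the difference as $\frac{y}{(2n-1)(2n+1)^2}\bigl(2 - \tfrac{y}{2}\bigr)$ is in fact a little cleaner than the paper's estimate (the paper bounds $Cy - Dy^2 \ge (C-D)y^2$ via $y \ge y^2$ and then checks $C - D \ge 0$), and your explicit handling of the endpoint $x = 1/2$ is a small improvement in care, but neither constitutes a genuinely different approach.
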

\begin{proof}
We prove it by mathematical induction on $n$. Assume that \eqref{lem::alpha} holds for $n=m \ge k$. For $n=m+1$, note that
\begin{equation*}
    \left(1 + y \right)^{1/\alpha} F_{m+1}(y) 
    =\left(1-\frac{y}{(2m+1)^{2}} \right) \cdot \left[ \left(1 + y \right)^{1/\alpha} F_{m}(y) \right],
\end{equation*}
hence it only remains to show that for $m \ge 2$ and $y \in [0,1]$,
\begin{equation*}    
    \left(1-\frac{y}{(2m+1)^{2}} \right)(1+a_{m} y) - (1+a_{m+1} y) \ge 0,
\end{equation*}
where $a_{m}=1/(4m-2)$. For any $y \in [0,1]$,
\begin{align*}
    &\left(1-\frac{y}{(2m+1)^{2}} \right)(1+a_{m} y) - (1+a_{m+1} y) \\
    &= \left(a_{m}-a_{m+1}-\frac{1}{(2m+1)^{2}} \right) y - \frac{a_{m}}{(2m+1)^{2}} y^{2} \\
    &\ge \left(a_{m}-a_{m+1}-\frac{1}{(2m+1)^{2}} - \frac{a_{m}}{(2m+1)^{2}} \right) y^{2} \\
    &= \left(\frac{1}{(2m-1)(2m+1)}-\frac{1}{(2m+1)^{2}} - \frac{1}{2(2m-1)(2m+1)^{2}} \right) y^{2} \\
    &=\frac{3}{2(2m-1)(2m+1)^{2}} y^{2} \ge \ 0,
\end{align*}
which proves \eqref{lem::alpha}. Therefore, for any $x \in [0, 1/2]$, it holds that
\begin{align*}
    \left(1 + 4x^{2} \right)^{1/\alpha} \cos(\pi x) 
    &=(1-4x^{2}) \left(1 + 4x^{2} \right)^{1/\alpha} \lim_{n \rightarrow \infty} F_{n}(4x^{2}) \\
    &\ge (1-4x^{2}) \lim_{n \rightarrow \infty} \left(1+\frac{2}{2n-1} x^{2} \right)
    =(1-4x^{2}).
\end{align*}
\end{proof}

For a fixed integer $n \ge 2$, the function $G_{n, \alpha}$ is decreasing in $\alpha > 0$. Hence, there exists $\alpha_{n}>0$, which we may call the $n$-threshold, such that
\begin{equation*}
    (0, \alpha_{n}] = \left\{\alpha>0 \, : \, G_{n, \alpha}(y) \ge 0, \, y \in [0, 1] \right\}.
\end{equation*}
Additionally, for a fixed $\alpha >0$, \cref{Redheffer::ind} indicates that if $G_{k, \alpha} \ge 0$, then $G_{n, \alpha} \ge 0$ for any $n \ge k$. In other words, the sequence $\left\{\alpha_{n}: n=2,3,\dots \right\}$ of thresholds monotonically increases. This observation highlights that we obtain a sharper bound for $\alpha > 0$ as we delay the initial step of mathematical induction.

\begin{proposition}
For any $0 \le \alpha < \alpha_{\infty} := \lim_{n \rightarrow \infty} \alpha_{n}$, we have
\begin{equation*}
    \left(1 + 4x^{2} \right)^{1/\alpha} \cos(\pi x) \ge (1-4x^{2}), \quad x \in [0, 1/2].
\end{equation*}
\end{proposition}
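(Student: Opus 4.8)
The plan is to reduce the statement directly to \cref{Redheffer::ind}, exploiting the monotonicity of the threshold sequence established in the discussion preceding it. The case $\alpha = 0$ is degenerate and trivial: for $x \in (0,1/2]$ the factor $(1+4x^{2})^{1/\alpha}$ diverges as $\alpha \searrow 0$, while at $x=0$ both sides equal $1$; interpreting the left-hand side by this limiting convention, the inequality holds (with equality only at $x=0$). So I only need to treat the range $0 < \alpha < \alpha_{\infty}$.

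First I would record the elementary but essential observation that, because the sequence of thresholds $\{\alpha_{n} : n \ge 2\}$ is monotonically increasing, its limit $\alpha_{\infty}$ coincides with its supremum. Consequently, for any fixed $\alpha$ with $0 < \alpha < \alpha_{\infty}$ there must exist an index $k \ge 2$ with $\alpha < \alpha_{k}$; in particular $\alpha \in (0, \alpha_{k}]$. By the very definition of the $k$-threshold, namely that $(0, \alpha_{k}]$ is exactly the set of positive $\alpha$ for which $G_{k,\alpha}(y) \ge 0$ on $[0,1]$, this membership is precisely the assertion that \eqref{lem::alpha} holds at $n = k$.

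With the hypothesis of \cref{Redheffer::ind} thus verified at $n=k$, the ``Consequently'' clause of that proposition applies verbatim and delivers $(1+4x^{2})^{1/\alpha}\cos(\pi x) \ge (1-4x^{2})$ for all $x \in [0,1/2]$, which is the desired conclusion.

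There is no genuine computational obstacle in this argument: the analytic work has been front-loaded into the induction step of \cref{Redheffer::ind} and into the monotonicity of $\{\alpha_{n}\}$. The single point that demands care is the \emph{strictness} of the inequality $\alpha < \alpha_{\infty}$, since it is exactly what guarantees the existence of a finite index $k$ with $\alpha < \alpha_{k}$. At the boundary value $\alpha = \alpha_{\infty}$ one may well have $\alpha_{n} < \alpha_{\infty}$ for every $n$ (the thresholds approaching their limit strictly from below), in which case no single induction base $n=k$ is available and the reduction breaks down — which is precisely why the proposition is stated only for $\alpha < \alpha_{\infty}$ rather than $\alpha \le \alpha_{\infty}$.
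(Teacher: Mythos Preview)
Your argument is correct and is essentially the paper's own proof, just written out in more detail: pick $k$ with $\alpha<\alpha_{k}$ (using monotonicity of the thresholds) and invoke \cref{Redheffer::ind}. The paper compresses this into a single sentence and does not separately discuss the degenerate $\alpha=0$ case, but otherwise the route is identical.
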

\begin{proof}
Given $0 \le \alpha < \alpha_{\infty}$, by \cref{Redheffer::ind}, there is some $n \ge 2$ such that $0 \le \alpha < \alpha_{n}$ so that \eqref{Redheffer::goal} holds.
\end{proof}

Note that the upper bound of $\alpha_{n}$ is given by
\begin{equation}\label{n::thres}
    \alpha_{n} \le \frac{\log 2}{\log (1+1/(4n-2))) - \log F_{n}(1)} =: \beta_{n},   
\end{equation}
since
\begin{equation*}
    G_{n, \alpha_{n}}(1) = 2^{1/\alpha} F_{n} (1) - \left( 1+\frac{1}{4n-2} \right) \ge 0.
\end{equation*}
If we aim to achieve the sharpest bound $\alpha_{T}$ through mathematical induction, we suspect that equality in \eqref{n::thres} should hold for any $n \ge 2$ since $\alpha_{\infty} \le \lim_{n \rightarrow \infty} \beta_{n} = \alpha_{T}$ by \cref{part::prod::deriv}. We show in the following proposition that $\alpha_{2} = \beta_{2}$. However, we are unable to prove if $\alpha_{n} = \beta_{n}$ for any $n \ge 2$, in general. This is because $G_{n, \beta_{n}}$ is no longer concave over $[0,1]$ for sufficiently large $n$, so the proof of the proposition below cannot be applied.
\begin{lemma}\label{lem::2::thres}
For any $7/9 < \alpha < 3$,
\begin{equation*}
    G_{2, \alpha}(y) = \left(1 + y \right)^{1/\alpha} \left(1-\frac{y}{9} \right) - \left( 1+\frac{y}{6} \right),
\end{equation*}
is concave on $[0, 1]$. Consequently, we have $\alpha_{2}=\beta_{2}= \log 2 / \log(21/16) (\approx 2.549)$. 
\end{lemma}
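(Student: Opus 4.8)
The plan is to prove concavity by a direct computation of $G_{2,\alpha}''$, and then to read off $\alpha_2 = \beta_2$ from the behaviour of $G_{2,\beta_2}$ at the two endpoints of $[0,1]$ using the fact that a concave function dominates the chord through its endpoint values.

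First I would differentiate twice. Setting $p := 1/\alpha$ and noting that the affine term $1 + y/6$ disappears under the second derivative, a short computation gives
\begin{equation*}
    G_{2,\alpha}''(y) = \frac{p}{9}\,(1+y)^{p-2}\,\bigl[(p-1)(9-y) - 2(1+y)\bigr] = \frac{p}{9}\,(1+y)^{p-2}\,\bigl[(9p-11) - (p+1)y\bigr].
\end{equation*}
The prefactor $\tfrac{p}{9}(1+y)^{p-2}$ is strictly positive on $[0,1]$, so the sign of $G_{2,\alpha}''$ is dictated entirely by the bracket, which is \emph{affine} in $y$. This is the decisive simplification: no delicate estimate of the transcendental factor $(1+y)^p$ is needed, only the sign of a line.

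Because the slope $-(p+1)$ of the bracket is negative, its maximum over $[0,1]$ sits at the left endpoint $y=0$, where it equals $9p-11 = 9/\alpha - 11$. Hence $G_{2,\alpha}'' \le 0$ throughout $[0,1]$ exactly when $9/\alpha \le 11$, i.e. for all $\alpha \ge 9/11$; no upper restriction on $\alpha$ is in fact needed for concavity. In particular the target exponent $\beta_2 = \log 2/\log(21/16) \approx 2.549$ lies well inside this region, so $G_{2,\beta_2}$ is concave on $[0,1]$. I would take care over this single endpoint evaluation, since it is the only place the threshold can be mislocated.

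For the consequence, observe that $G_{2,\alpha}(0) = 0$ for every $\alpha$, while the defining relation $2^{1/\beta_2} = 21/16$ gives $G_{2,\beta_2}(1) = \tfrac{8}{9}\cdot\tfrac{21}{16} - \tfrac{7}{6} = 0$. Since a concave function on $[0,1]$ lies above the chord through its endpoint values, concavity together with $G_{2,\beta_2}(0) = G_{2,\beta_2}(1) = 0$ forces $G_{2,\beta_2}(y) \ge 0$ for all $y \in [0,1]$; thus $\beta_2$ is admissible and $\alpha_2 \ge \beta_2$. Combined with the reverse inequality $\alpha_2 \le \beta_2$ already furnished by \eqref{n::thres}, this yields $\alpha_2 = \beta_2$. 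The conceptual heart of the argument — and the step I expect to matter most — is this chord reduction, which turns the pointwise inequality $G_{2,\beta_2} \ge 0$ on the whole interval into the two endpoint identities that also pin down $\beta_2$; the remainder is the routine second-derivative calculation above.
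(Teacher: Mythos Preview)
Your argument is exactly the paper's: compute $G_{2,\alpha}''$, note that the bracket governing its sign is affine in $y$, and use concavity to reduce the inequality $G_{2,\alpha}\ge 0$ on $[0,1]$ to the two endpoint evaluations $G_{2,\alpha}(0)=0$ and $G_{2,\alpha}(1)\ge 0$. Your second-derivative calculation is in fact the correct one---the paper's version carries a spurious extra factor $1/\alpha$ in the cross term (it writes $\tfrac{2}{9\alpha}(1+y)$ where it should be $\tfrac{2}{9}(1+y)$), which is why the paper states the concavity range as $7/9<\alpha<3$ rather than your $\alpha\ge 9/11$; the lemma's hypothesis is therefore slightly misstated (for $7/9<\alpha<9/11$ one has $G_{2,\alpha}''(0)>0$), but since $\beta_{2}\approx 2.549$ lies comfortably in both ranges the consequence $\alpha_{2}=\beta_{2}$ is unaffected.
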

\begin{proof}
For a fixed $7/9 < \alpha < 3$,
\begin{align*}
    G_{2, \alpha} ''(y) :&= \left( \left(1 + y \right)^{1/\alpha} \right)'' \left(1-\frac{y}{9} \right) - 2 \left( \left(1 + y \right)^{1/\alpha} \right)' \left(1-\frac{y}{9} \right)' \\
    &= \frac{1}{\alpha} \left(1 + y \right)^{1/\alpha-2} \left[\left(\frac{1}{\alpha}-1 \right) \left(1-\frac{y}{9} \right) - \frac{2}{9 \alpha} (1+y)  \right] \\
    &= \frac{1}{\alpha} \left(1 + y \right)^{1/\alpha-2} \left[ -\frac{y}{9} \left(\frac{3}{\alpha}-1 \right) - \left( 1-\frac{7}{9 \alpha} \right) \right] <0, \quad y \in [0,1],  
\end{align*}
thus $G_{2, \alpha}$ is concave. If $\alpha> \log 2 / \log(21/16)$, then $G_{2, \alpha}(1)=8/9 \cdot 2^{1/\alpha}-7/6 <8/9 \cdot 21/16-7/6=0$, indicating that \eqref{lem::alpha} is not met. Conversely, if $\alpha \le \log 2 / \log(21/16)$, $G_{2, \alpha}$ is concave with $G_{2, \alpha_{2}}(0)=0, \, G_{2, \alpha}(1) \ge 0$, hence $G_{2, \alpha} \ge 0$.    
\end{proof}

\subsection{Sharpest Bound}
In this subsection, we establish that the value of $\alpha_{T}$, as defined in \eqref{opt::thres}, represents the most precise bound. The approach involves introducing a slight relaxation in $G_{n, \alpha}$, replacing $1+y/(4n-2)$ with $1$.

\begin{lemma}\label{part::prod::deriv}
For any $n \ge 2$, we have
\begin{equation*}
    \frac{F_{n}'(y)}{F_{n}(y)} = -\sum_{k=2}^{n} \frac{1}{(2k-1)^{2}-y}, \quad \frac{F_{n}'(1)}{F_{n}(1)}=-\frac{n-1}{4n}, \quad  \lim_{n \rightarrow \infty} F_{n}(1)=\frac{\pi}{4}.
\end{equation*}
\end{lemma}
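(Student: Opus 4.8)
The plan is to handle the three identities separately: the first two are elementary consequences of logarithmic differentiation and a telescoping sum, while the third rests on the infinite product for the cosine recorded above together with a single application of L'Hôpital's rule.

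For the logarithmic derivative, I would take logarithms of the finite product, which is licit since every factor $1 - y/(2k-1)^2$ with $k \ge 2$ satisfies $(2k-1)^2 - y \ge 9 - 1 > 0$ on $[0,1]$ and hence stays positive. Differentiating $\log F_n(y) = \sum_{k=2}^n \log\!\bigl(1 - y/(2k-1)^2\bigr)$ term by term gives
\[
\frac{F_n'(y)}{F_n(y)} = \sum_{k=2}^n \frac{-1/(2k-1)^2}{1 - y/(2k-1)^2} = -\sum_{k=2}^n \frac{1}{(2k-1)^2 - y},
\]
which is the first identity. Setting $y = 1$ and using the factorization $(2k-1)^2 - 1 = (2k-2)(2k) = 4k(k-1)$, I would reduce the sum to a telescoping one,
\[
\frac{F_n'(1)}{F_n(1)} = -\frac14\sum_{k=2}^n \frac{1}{k(k-1)} = -\frac14\sum_{k=2}^n\Bigl(\frac{1}{k-1}-\frac1k\Bigr) = -\frac14\Bigl(1-\frac1n\Bigr) = -\frac{n-1}{4n},
\]
establishing the second identity.

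For the limit, I would isolate the $n=1$ factor in the cosine product to write $\cos(\pi x) = (1-4x^2)\lim_{N\to\infty}F_N(4x^2)$, so that $\lim_{N\to\infty}F_N(y) = \cos(\pi x)/(1-4x^2)$ for $y = 4x^2 \in [0,1)$. The quantity $\lim_{N\to\infty}F_N(1)$ is the value of this limit function at the endpoint, and the boundary value of the right-hand side is computed by L'Hôpital's rule:
\[
\lim_{x \nearrow 1/2}\frac{\cos(\pi x)}{1-4x^2} = \lim_{x \nearrow 1/2}\frac{-\pi\sin(\pi x)}{-8x} = \frac{\pi}{4}.
\]

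The hard part is justifying the interchange of the limits $N\to\infty$ and $y\nearrow 1$, i.e.\ the continuity of the limit function up to $y=1$. I would settle this by observing that $\sum_{k\ge 2}\sup_{y\in[0,1]}\bigl|\log\!\bigl(1-y/(2k-1)^2\bigr)\bigr|$ is dominated by a constant multiple of $\sum_{k\ge2}(2k-1)^{-2} < \infty$, so the product converges uniformly on $[0,1]$ and its limit is continuous there; passing to the endpoint then yields $\lim_{N\to\infty}F_N(1) = \pi/4$. Should one prefer to avoid the uniform-convergence argument, an alternative is to telescope the partial product into the closed form $F_N(1) = 2^{4N-2}(N!)^4/\bigl(N\,((2N)!)^2\bigr)$ and extract the limit directly from the Wallis product $2^{4N}(N!)^4/\bigl((2N+1)\,((2N)!)^2\bigr)\to\pi/2$, since $F_N(1) = \tfrac14\cdot\tfrac{2N+1}{N}\cdot\bigl[2^{4N}(N!)^4/((2N+1)((2N)!)^2)\bigr]$.
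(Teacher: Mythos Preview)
Your argument is correct and follows essentially the same route as the paper: logarithmic differentiation for the first identity, the telescoping partial-fraction sum for the second, and the identification $\lim_{n\to\infty}F_n(1)=\lim_{x\to 1/2}\cos(\pi x)/(1-4x^2)$ for the third, where the paper evaluates the last limit via the substitution $x=1/2-t$ rather than L'H\^opital. You are in fact more careful than the paper, which silently assumes the interchange of limits that you justify by uniform convergence; your Wallis-product alternative is a nice self-contained bonus that sidesteps the cosine product entirely.
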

\begin{proof}
Note that 
\begin{equation*}
    \log F_{n} (y) =\sum_{k=2}^{n} \log \left(\frac{(2k-1)^{2}-y}{(2k-1)^{2}} \right), \quad y \in [0, 1] 
\end{equation*}
Thus,
\begin{align*}
    \frac{F_{n}'(y)}{F_{n}(y)} = [\log F_{n}]' (y) = -\sum_{k=2}^{n} \frac{1}{(2k-1)^{2}-y}.
\end{align*}
If $y=1$, we obtain
\begin{align*}
    \frac{F_{n}'(1)}{F_{n}(1)}= - \sum_{k=2}^{n} \frac{1}{(2k-1)^{2}-1} = -\frac{1}{4} \sum_{k=2}^{n} \left(\frac{1}{k-1} - \frac{1}{k} \right) = -\frac{n-1}{4n}.
\end{align*}
Finally,
\begin{equation*}
    \lim_{n \rightarrow \infty} F_{n}(1)= \lim_{x \rightarrow 1/2} \frac{\cos (\pi x)}{1-4x^{2}} = \lim_{t \rightarrow 0} \frac{\sin (\pi t)}{4t (1-t)} =\frac{\pi}{4},
\end{equation*}
where we have put $x=1/2-t$.
\end{proof}

\begin{theorem}[Generalized Redheffer]\label{Redheffer::gen}
Let $\alpha > 0$.
\begin{equation}\label{Redheffer::alp}
    \left(1 + 4x^{2} \right)^{1/\alpha} \cos(\pi x)  \ge (1-4x^{2}), \quad x \in [0,1/2],
\end{equation} 
holds if and only if $0< \alpha \le \alpha_{T}$. In the case where $0< \alpha < \alpha_{T}$, equality is achieved if and only if $x=0$ or $x=1/2$.   
\end{theorem}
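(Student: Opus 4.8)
The plan is to cancel the common factor $1-4x^{2}$, pass to logarithms, and reduce \eqref{Redheffer::alp} to a one–dimensional concavity statement. Writing $y=4x^{2}\in[0,1]$ and using the product expansion $\cos(\pi x)=(1-y)\,F_{\infty}(y)$ with $F_{\infty}(y):=\lim_{n\to\infty}F_{n}(y)$, the inequality \eqref{Redheffer::alp}, after dividing by the positive factor $1-4x^{2}$ on $x\in[0,1/2)$, is equivalent to
\begin{equation*}
    \phi(y):=\frac{1}{\alpha}\log(1+y)+\log F_{\infty}(y)\ge 0,\qquad y\in[0,1).
\end{equation*}
Here $F_{\infty}$ extends continuously and positively to $y=1$ with $F_{\infty}(1)=\pi/4$ by \cref{part::prod::deriv}, so $\phi$ is continuous on $[0,1]$, and $\phi(0)=0$ since $F_{\infty}(0)=1$. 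The endpoint $x=1/2$ (i.e.\ $y=1$) is trivial because both sides of \eqref{Redheffer::alp} vanish there, so the whole question is whether $\phi$ stays above its left-endpoint value.

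For the necessity I would let $y\nearrow 1$. By continuity the limiting value is $\phi(1)=\tfrac{1}{\alpha}\log 2+\log(\pi/4)$, which is $\ge 0$ exactly when $\alpha\le\alpha_{T}=\log 2/\log(4/\pi)$. Hence if $\alpha>\alpha_{T}$ then $\phi(1)<0$, so $\phi<0$ on a left neighbourhood of $1$ and \eqref{Redheffer::alp} fails for $x$ near $1/2$; this forces $\alpha\le\alpha_{T}$.

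For the sufficiency I would exploit concavity. Extending the derivative formula of \cref{part::prod::deriv} to $n=\infty$ (legitimate because $\sum_{k\ge2}\frac{1}{(2k-1)^{2}-y}$ converges uniformly on $[0,1]$, its terms being $O(k^{-2})$), one gets
\begin{equation*}
    \phi'(y)=\frac{1}{\alpha(1+y)}-\sum_{k=2}^{\infty}\frac{1}{(2k-1)^{2}-y}.
\end{equation*}
The first term is strictly decreasing in $y$, while the subtracted sum is strictly increasing (each denominator $(2k-1)^{2}-y\ge 9-y>0$ shrinks as $y$ grows), so $\phi'$ is strictly decreasing and $\phi$ is strictly concave on $[0,1]$. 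For $0<\alpha\le\alpha_{T}$ we have $\phi(0)=0$ and $\phi(1)\ge 0$, so concavity yields $\phi(y)\ge(1-y)\phi(0)+y\,\phi(1)=y\,\phi(1)\ge 0$ for every $y\in[0,1]$, which is \eqref{Redheffer::alp}. When $0<\alpha<\alpha_{T}$ the inequality $\phi(1)>0$ is strict, and strict concavity sharpens the bound to $\phi(y)>y\,\phi(1)\ge 0$ on $(0,1)$; thus equality in \eqref{Redheffer::alp} can occur only at $y=0$ or $y=1$, i.e.\ $x=0$ or $x=1/2$.

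The crux is the concavity in the third paragraph. The paper has already noted that the relaxed function $G_{n,\beta_{n}}$ fails to be concave for large $n$, which is precisely why an induction cannot reach $\alpha_{T}$; the decisive move is to take the logarithm and compare against the constant $1$ (rather than against $1+y/(4n-2)$), since this restores a globally concave profile whose minimum over $[0,1]$ is automatically attained at an endpoint. The only technical point requiring care is justifying the term-by-term differentiation of the infinite product, which is routine uniform-convergence bookkeeping; everything else reduces to evaluating $\phi$ at $y=0$ and $y=1$.
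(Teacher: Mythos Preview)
Your argument is correct and rests on the same idea as the paper's: the logarithmic derivative of $(1+y)^{1/\alpha}F(y)$ is strictly decreasing in $y$, so the function attains its minimum on $[0,1]$ at an endpoint, and everything reduces to the value at $y=1$. Your packaging is slightly more streamlined---taking logarithms gives genuine concavity (hence the chord bound) rather than the paper's unimodality, and working directly with $F_{\infty}$ via uniform convergence bypasses the paper's detour through the finite truncations $F_{n}$ and the limit $\gamma_{n}\nearrow\alpha_{T}$---but the core computation and the endpoint comparison are identical.
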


\begin{proof}
Let $\alpha > 0$ and $n \ge 2$. 
By \cref{part::prod::deriv}, it holds for any $y \in [0, 1]$ that
\begin{align*}
    [\left(1 + y \right)^{1/\alpha} F_{n} (y)]' 
    &= [\left(1 + y \right)^{1/\alpha}]' F_{n}(y) + \left(1 + y \right)^{1/\alpha} F_{n} '(y) \\
    &= \frac{1}{\alpha} \left(1 + y \right)^{1/\alpha-1} F_{n}(y) + \left(1 + y \right)^{1/\alpha} F_{n} '(y) \\
    &= \frac{1}{\alpha} \left(1 + y \right)^{1/\alpha-1} F_{n}(y) \left[1 + \alpha \left(1 + y \right) \frac{F_{n}'(y)}{F_{n}(y)} \right] \\
    &= \frac{1}{\alpha} \left(1 + y \right)^{1/\alpha-1} F_{n}(y) \left[1 - \alpha (1+y) \sum_{k=2}^{n} \frac{1}{(2k-1)^{2}-y} \right].
\end{align*}
Note that, for any $\alpha>0$,
\begin{equation*}
    \left[1 - \alpha (1+y) \sum_{k=2}^{n} \frac{1}{(2k-1)^{2}-y} \right] \searrow \left[1-\frac{n-1}{2n}\right] \quad \text{ as } y \nearrow 1
\end{equation*}
is strictly monotonely decreasing in $y$. Therefore, the function $\left(1 + y \right)^{1/\alpha} F_{n} (y)$ is either (i) strictly monotonely increasing or (ii) strictly monotonely increasing on $[0, y^{*}]$ and decreasing on $[y^{*}, 1]$ for some $y^{*} \in (0,1)$. In any case, $\left(1 + y \right)^{1/\alpha} F_{n} (y) \ge 1$ for any $y \in [0, 1]$ if and only if $2^{1/\alpha} F_{n}(1) \ge 1 \, \Longleftrightarrow \, \alpha \le - \log 2/\log F_{n}(1) =: \gamma_{n}$.

To show \eqref{Redheffer::alp}, note that $\gamma_{n} \nearrow \alpha_{T}$ as $n \rightarrow \infty$ by \cref{part::prod::deriv}. Thus, for any $0 \le \alpha < \alpha_{T}$, there is some $n \ge 2$ such that $\alpha < \gamma_{n}$, which yields
\begin{align*}
    \left(1 + 4x^{2} \right)^{1/\alpha} \cos(\pi x) 
    \ge (1-4x^{2}) \left(1 + 4x^{2} \right)^{1/\alpha} F_{n}(4x^{2})  \ge (1-4x^{2}), \quad x \in [0, 1/2].
\end{align*}
Indeed, \eqref{Redheffer::alp} remains valid in the case where $\alpha = \alpha_{T}$, because $\lim_{\alpha \nearrow \alpha_{T}} (1 + 4x^{2} )^{1/\alpha} = (1 + 4x^{2} )^{1/\alpha_{T}}$ for any $x \in [0, 1/2]$.

Finally, if $\alpha < \alpha_{T}$, choose $\tilde{\alpha}$ with $\alpha < \tilde{\alpha} < \alpha_{0}$. Then,
\begin{align*}
    \left(1 + 4x^{2} \right)^{1/\alpha} \cos(\pi x) 
    &=\left(1 + 4x^{2} \right)^{\left(1/\alpha-1/\tilde{\alpha}\right)} \left(1 + 4x^{2} \right)^{1/\tilde{\alpha}} \cos(\pi x) \\
    &\ge \left(1 + 4x^{2} \right)^{\left(1/\alpha-1/\tilde{\alpha}\right)} (1-4x^{2})
    \ge (1-4x^{2}),
\end{align*}
and equality holds if and only if $x=0$ or $x=1/2$.
\end{proof}

\begin{remark}
While the critical value of \cref{Redheffer::gen} is $\alpha_{T} = \log 2 /\log (4/\pi) (\approx 2.869)$, the threshold via mathematical induction that we achieved is $\alpha_{2}= \log 2 / \log(21/16) (\approx 2.549)$. 
\end{remark}

\begin{corollary}\label{QCthm}
For $\theta \in (0,1)$,
\begin{equation*}
    \sin^{2}(\pi \theta) \left(\frac{1}{\theta^{2}} + \frac{1}{(1-\theta)^{2}} \right) \ge 8,
\end{equation*}    
and equality holds if and only if $\theta=1/2$.
\end{corollary}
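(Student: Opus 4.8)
The plan is to recognize this as the $\alpha = 2$ instance of the generalized Redheffer inequality \eqref{Redheffer::alp} after a symmetric change of variables. First I would observe that the expression $\sin^{2}(\pi\theta)\bigl(\theta^{-2} + (1-\theta)^{-2}\bigr)$ is invariant under $\theta \mapsto 1-\theta$, so it suffices to treat $\theta \in (0, 1/2]$ and to locate the equality case there; the conclusion for $\theta \in [1/2, 1)$ then follows by symmetry.

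Next I would substitute $\theta = 1/2 - x$ with $x \in [0, 1/2)$, so that $1 - \theta = 1/2 + x$ and $\sin(\pi\theta) = \cos(\pi x)$. A direct computation gives
\begin{equation*}
    \frac{1}{\theta^{2}} + \frac{1}{(1-\theta)^{2}} = \frac{4}{(1-2x)^{2}} + \frac{4}{(1+2x)^{2}} = \frac{8(1+4x^{2})}{(1-4x^{2})^{2}},
\end{equation*}
so that the left-hand side of the corollary equals $\cos^{2}(\pi x)\cdot 8(1+4x^{2})/(1-4x^{2})^{2}$. The claimed bound $\ge 8$ is therefore equivalent to $\cos^{2}(\pi x)\,(1+4x^{2}) \ge (1-4x^{2})^{2}$ on $[0, 1/2)$.

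Since $\cos(\pi x) > 0$ and $1 - 4x^{2} > 0$ on $[0, 1/2)$, both sides are nonnegative, so I may take square roots without altering the inequality, obtaining $(1 + 4x^{2})^{1/2}\cos(\pi x) \ge 1 - 4x^{2}$. This is exactly \eqref{Redheffer::alp} with $\alpha = 2$, and because $\alpha = 2 < \alpha_{T} \approx 2.869$, \cref{Redheffer::gen} delivers the inequality on the whole interval, closing the proof of the bound.

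For the equality statement, the same theorem asserts that for $0 < \alpha < \alpha_{T}$ equality in \eqref{Redheffer::alp} holds if and only if $x = 0$ or $x = 1/2$; as the substitution and the square-root step are both equivalences on the range considered, the corresponding equality for the corollary holds under the same condition. Since $x = 1/2$ is excluded from $[0, 1/2)$, equality occurs precisely at $x = 0$, i.e. $\theta = 1/2$, and by the initial symmetry this is the unique equality point in $(0,1)$. I do not anticipate a genuine obstacle here, since the entire content is the reduction to the $\alpha = 2$ case; the only points demanding care are the clean verification of the algebraic simplification of $\theta^{-2} + (1-\theta)^{-2}$ and the sign check that legitimizes passing to square roots.
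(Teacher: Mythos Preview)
Your argument is correct and mirrors the paper's own proof: both reduce the corollary to the $\alpha=2$ case of \cref{Redheffer::gen} via the substitution centering $\theta$ at $1/2$ and the algebraic identity $\theta^{-2}+(1-\theta)^{-2}=8(1+4x^{2})/(1-4x^{2})^{2}$, followed by a square root. The only cosmetic difference is that the paper takes $x=\theta-1/2\in(-1/2,1/2)$ and then invokes evenness, whereas you first use the $\theta\mapsto 1-\theta$ symmetry and work on $x\in[0,1/2)$; these are equivalent.
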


\begin{proof}
By substituting $x=\theta-1/2 \in (-1/2, 1/2)$, the inequality becomes
\begin{equation*}
    \cos^{2}(\pi x) \left(\frac{1}{(x-1/2)^{2}} + \frac{1}{(x+1/2)^{2}} \right) \ge 8,
\end{equation*}
or equivalently,
\begin{align*}
    &\cos^{2}(\pi x) \left( 4x^{2}+1 \right)=
    2 \cos^{2}(\pi x) \left( \left(x-\frac{1}{2} \right)^{2} + \left(x+\frac{1}{2} \right)^{2} \right) \ge \left(2x-1 \right)^{2} \left(2x+1 \right)^{2} \\
    &\Longleftrightarrow \quad \left(1 + 4x^{2} \right)^{1/2} \cos(\pi x)  \ge (1-4x^{2}).
\end{align*}
Since both hand sides are even functions in $x$, the result follows from \cref{Redheffer::gen} with $\alpha = 2$.
\end{proof}

\section{Quantum Error Correction}
We begin by presenting some fundamental postulates in quantum mechanics:
\begin{itemize}
\item \textbf{State space postulate}: A closed physical system is associated with a Hilbert space $\bH$, referred to as the state space of the system. A unit vector in $\bH$ completely describes the system.
\item \textbf{Measurement postulate}: Each observable is a Hermitian (self-adjoint) operator, denoted as $O=\sum_{\lambda_{i} \in \Lambda} \lambda_{i} \ket{\psi_{i}}\bra{\psi_{i}}$, acting on the state space $\bH$. When a particle initially has a quantum state $\ket{\phi}$, the probability of obtaining the measurement outcome $\ket{\psi_{i}}$ is given by $\mathbb{P}(\lambda_{i})=|\braket{\phi}{\psi_{i}}|^{2}$. Upon obtaining this outcome, the system instantly collapses into the eigenstate $\ket{\psi_{i}}$.
\end{itemize}

In quantum computing \cite{buchmann2016post}, the $n$-qubit state space $\bH_{n}$ is a $2^{n}$ dimensional complex vector space with an orthonormal basis $\{ \ket{\vec{x}} : \vec{x} \in \{0, 1\}^{n} \}$, commonly referred to as the computational basis. This basis can be represented using the tensor product, denoted as $\ket{\vec{x}} = \ket{x_{0}} \otimes \dots \ket{x_{n-1}}$. Consequently, we can view $\bH_{n}$ as the tensor product of $n$ copies of the single-qubit state space $\bH_{1} \cong \mathbb{C} \oplus \mathbb{C}$. The computational basis can be identified with the additive group of integers modulo $2^{n}$ through the canonical bijection:
\begin{equation*}
    \vec{x} = (x_{0}, \dots, x_{n-1}) \in \{0, 1\}^{n} \mapsto x = \sum_{j=0}^{n-1} x_{j} 2^{j} \in \bZ_{2^{n}},
\end{equation*}
so we use the notation $\vec{x} = \ket{x}_{n}$ henceforth. 

Let us consider the problem the problem of \textbf{quantum phase estimation}. Given a black-box unitary operator $U:\bH_{n} \rightarrow \bH_{n}$, our goal is to devise an efficient quantum circuit that can closely approximate an eigenvalue. This algorithm is frequently used as a subroutine in other quantum algorithms, such as Shor's algorithm. Since any eigenvalue has an absolute value $1$, it can be expressed as $\lambda=2^{2\pi \imath w}$ the phase $w \in \bR/\bZ$. However, representing any real number in $[0, 1)$ precisely using finite quantum gates is impossible. Thus, our task is to estimate the best possible proxy $x/2^{n}$ for the phase, where $x \in \bZ_{2^{n}}$.

\begin{definition}
Let $n \in \bN, w \in \bR$ and $x \in \bZ_{2^{n}}$. Define
\begin{equation*}
    \Delta(w, n, x)=\left\{ w-\frac{x}{2^{n}}-m : m=\argmin_{m' \in \bZ} \left| w-\frac{x}{2^{n}}-m' \right| \right\} \in [0, 1].
\end{equation*}
\end{definition}

Given a phase $w \in \bR/\bZ$, we associate the quantum state $\bH_{n}$ of $n$-qubit with a unit vector
\begin{equation*}
    \ket{\psi_{n}(w)}_{n} := \frac{1}{\sqrt{2^{n}}} \sum_{y=0}^{2^{n}-1} e^{2 \pi \imath w y} \ket{y}_{n}
    =\bigotimes_{j=0}^{n-1} \frac{\ket{0} + e^{2 \pi \imath 2^{j} w} \ket{1} }{\sqrt{2}}.
\end{equation*}
As the Schmidt rank of $\ket{\psi_{n}(w)}_{n}$ is $1$, it represents a non-entangled quantum state of $n$-qubits for any $w \in \bR$. 

The quantum Fourier transform (QFT) is a linear operator $QFT_{n}:\bH_{n} \rightarrow \bH_{n}$, which is the quantum analog of the discrete Fourier transform. It is uniquely determined by
\begin{equation*}
    QFT_{n} \ket{x}_{n} := \ket*{\psi_{n} \left( \frac{x}{2^{n}} \right)}_{n} = \frac{1}{\sqrt{2^{n}}} \sum_{y=0}^{2^{n}-1} e^{2 \pi \imath \frac{x}{2^{n}} y } \ket{y}_{n}, \quad x \in \bZ_{2^{n}},
\end{equation*}
and the inverse QFT is determined by
\begin{equation*}
    QFT^{-1}_{n} \ket{x}_{n} := \ket*{\psi_{n} \left( \frac{-x}{2^{n}} \right)}_{n} = \frac{1}{\sqrt{2^{n}}} \sum_{y=0}^{2^{n}-1} e^{-2 \pi \imath \frac{x}{2^{n}} y } \ket{y}_{n}, \quad x \in \bZ_{2^{n}}.
\end{equation*}
It can be easily shown that both operators are unitary.

QFT is a crucial quantum algorithm that can be efficiently implemented on a quantum computer. It can be decomposed into the product of simpler unitary matrices: for the discrete Fourier transform on $2^{n}$ amplitudes, QFT can be realized as a quantum circuit that consists of only $O(n^2)$ Hadamard gates and controlled phase shift gates  \cite{nielsen2010quantum}, as opposed to $O(n2^n)$ gates required for the classical discrete Fourier transform. The quantum Fourier transform operates on a computational basis, and its outcome is a sequence of probability amplitudes for all possible outcomes upon measurement, as dictated by the measurement postulate.

Returning to the quantum phase estimation problem discussed earlier, it is well known that when having access to the eigenstate that corresponds to the eigenvalue $\lambda=2^{2\pi \imath w}$ as an initial state, there exists a quantum circuit that yields the final state $QFT_{n}^{-1} \ket{\psi_{n}(w)}_{n}$ \cite{nielsen2010quantum}. The following theorem demonstrates that the outcome of the final state is close to the true phase $w$ with high probability.

\begin{proposition}
Let $n \in \bN, w \in \bR$ such that $2^{n}w \notin \bZ$. Let $p(x)$ denote the probability of measuring $x \in \bZ_{2^{n}}$, the outcome of measuring $QFT_{n}^{-1} \ket{\psi_{n}(w)}_{n}$ in the computational basis of $\bH_{n}$. Then with probability at least $8/\pi^{2}$, we have $| \Delta(w, n, x)| <2^{-n}$.
\end{proposition}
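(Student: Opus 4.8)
The plan is to compute the measurement probability $p(x)$ in closed form and then reduce the statement to \cref{QCthm}. First I would expand $QFT_n^{-1}\ket{\psi_n(w)}_n$ in the computational basis. Combining the definition of $QFT_n^{-1}$ with that of $\ket{\psi_n(w)}_n$, the amplitude of $\ket{x}_n$ is the geometric sum
\begin{equation*}
    a_x = \frac{1}{2^n}\sum_{y=0}^{2^n-1} e^{2\pi\imath y(w - x/2^n)} = \frac{1}{2^n}\cdot\frac{e^{2\pi\imath 2^n(w - x/2^n)} - 1}{e^{2\pi\imath(w - x/2^n)} - 1},
\end{equation*}
which is well defined because $2^n w \notin \bZ$ forces $w - x/2^n \notin \bZ$. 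Writing $\delta = w - x/2^n$ and using $|e^{\imath\vartheta}-1|^2 = 4\sin^2(\vartheta/2)$ gives
\begin{equation*}
    p(x) = |a_x|^2 = \frac{1}{2^{2n}}\cdot\frac{\sin^2(\pi 2^n\delta)}{\sin^2(\pi\delta)} = \frac{\sin^2(\pi 2^n w)}{2^{2n}\sin^2(\pi(w - x/2^n))},
\end{equation*}
where the last equality uses that $x\in\bZ$ makes $\sin^2(\pi 2^n\delta)=\sin^2(\pi(2^n w - x))=\sin^2(\pi 2^n w)$, so the numerator is independent of $x$.

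Next I would identify precisely which $x$ satisfy $|\Delta(w,n,x)| < 2^{-n}$. Writing $2^n w = a + \theta$ with $a=\lfloor 2^n w\rfloor$ and fractional part $\theta\in(0,1)$ (nonzero since $2^n w\notin\bZ$), the condition $|\Delta(w,n,x)| < 2^{-n}$ is equivalent to $2^n\delta = 2^n w - x$ lying within distance $1$ of a multiple of $2^n$. A short counting argument over $x\in\{0,\dots,2^n-1\}$ shows that exactly two consecutive indices qualify — the best lower and upper approximants — for which $2^n\delta$ equals $\theta$ and $\theta-1$ respectively. Hence the two relevant residuals are $\delta_+ = \theta/2^n$ and $\delta_- = (\theta-1)/2^n$, and, using $\sin^2(\pi 2^n w)=\sin^2(\pi\theta)$ together with the evenness of $\sin^2$, the probability of a good outcome is
\begin{equation*}
    P := \sum_{x:\,|\Delta|<2^{-n}} p(x) = \frac{\sin^2(\pi\theta)}{2^{2n}}\left[\frac{1}{\sin^2(\pi\theta/2^n)} + \frac{1}{\sin^2(\pi(1-\theta)/2^n)}\right].
\end{equation*}

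Finally I would invoke the elementary bound $\sin(\pi u)\le \pi u$ for $u\ge 0$, valid here since $\theta/2^n,\,(1-\theta)/2^n\in(0,2^{-n})$. This yields $\sin^2(\pi\theta/2^n)\le(\pi\theta/2^n)^2$ and likewise for the second term, so that
\begin{equation*}
    P \ge \frac{\sin^2(\pi\theta)}{2^{2n}}\cdot\frac{2^{2n}}{\pi^2}\left[\frac{1}{\theta^2}+\frac{1}{(1-\theta)^2}\right] = \frac{\sin^2(\pi\theta)}{\pi^2}\left[\frac{1}{\theta^2}+\frac{1}{(1-\theta)^2}\right] \ge \frac{8}{\pi^2},
\end{equation*}
where the last inequality is exactly \cref{QCthm} applied with $\theta\in(0,1)$, giving the claimed bound.

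The routine parts are the geometric-sum evaluation and the final substitution; the step I expect to require the most care is the combinatorial identification of the good indices, namely showing that exactly two values of $x$ satisfy $|\Delta(w,n,x)| < 2^{-n}$ and that they are the consecutive approximants producing residuals $\theta/2^n$ and $(1-\theta)/2^n$. This is where the modular structure of $\bZ_{2^n}$ and the definition of $\Delta$ as distance to the nearest integer interact, and one must use $2^n w\notin\bZ$ to rule out boundary ambiguity so that both approximants are counted under the strict inequality. Once this reduction is secured, the emergence of the two reciprocal-square terms is precisely what the generalized Redheffer inequality at $\alpha=2$ is designed to control.
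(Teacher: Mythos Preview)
Your proposal is correct and follows essentially the same approach as the paper: compute $p(x)$ via the geometric sum, isolate the two good indices, bound $\sin^2(\pi u)\le(\pi u)^2$, and finish with \cref{QCthm}. Your treatment is in fact more careful than the paper's, which asserts the two-term sum without explicitly identifying the good indices and contains a minor typo in the second denominator; your explicit decomposition $2^n w=a+\theta$ and the residuals $\theta/2^n,\,(1-\theta)/2^n$ make the reduction transparent.
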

\begin{proof}
For the ease of notation, we denote $\Delta=\Delta(w, n, x)$. By the linearity of the inverse QFT,
\begin{align*}
    QFT_{n}^{-1} \ket{\psi_{n}(w)}_{n} =& \frac{1}{\sqrt{2^{n}}} \sum_{x=0}^{2^{n}-1} e^{2 \pi \imath w x} QFT_{n}^{-1} \ket{x}_{n} \\
    =& \frac{1}{2^{n}} \sum_{x=0}^{2^{n}-1} \sum_{y=0}^{2^{n}-1} e^{2 \pi \imath w y} e^{-2 \pi \imath \frac{x}{2^{n}} y } \ket{x}_{n} 
    = \sum_{x=0}^{2^{n}-1} \frac{1}{2^{n}}  \sum_{y=0}^{2^{n}-1} e^{2 \pi \imath \Delta y} \ket{x}_{n}.
\end{align*}
hence the probability of measuring $x$ is given by
\begin{equation*}
    p(x) = \frac{1}{2^{n}}  \sum_{y=0}^{2^{n}-1} e^{2 \pi \imath \Delta y} 
    =\frac{1 - e^{2 \pi \imath 2^{n} \Delta(w, n, x)}}{1 - e^{2 \pi \imath \Delta(w, n, x)}}
    =\frac{\sin^{2} (\pi 2^{n} \Delta)}{2^{2n} \sin^{2} (\pi \Delta)}
\end{equation*}
Therefore, by substituting $\theta = 2^{n} \Delta$,
the lower bound of the probability of measuring $x$ with $|2^{n} \Delta|<1$ is given by \cref{QCthm}:
\begin{align*}
    \frac{\sin^{2} (\pi 2^{n} \Delta)}{2^{2n} \sin^{2} (\pi \Delta)} + \frac{\sin^{2} (\pi 2^{n} (1-\Delta))}{2^{2n} \sin^{2} (\pi (1-\Delta))}
    =& \frac{\sin^{2} (\pi \theta)}{2^{2n} \sin^{2} (\pi \theta/2^{n})} + \frac{\sin^{2} (\pi \theta)}{2^{2n} \sin^{2} (\pi (1-\theta/2^{n}))} \\
    \ge& \frac{\sin^{2} (\pi \theta)}{\pi^{2}} \left(\frac{1}{\theta^{2}} + \frac{1}{(1-\theta)^{2}} \right) \ge \frac{8}{\pi^{2}}.
\end{align*}
\end{proof}

\section*{Acknowledgments}
I would like to thank Johannes Buchmann for invoking my curiosity in pursuit of solving the presented inequality, and to Yoav Zemel for giving me a clue to prove the inequality in a sharpest way possible.

\bibliographystyle{imsart-number}
\bibliography{bibliography}  

\end{document}